\long\def\omit#1{\relax}
\newtheorem{theorem}{Theorem}[section]
\newtheorem{lemma}[theorem]{Lemma}
\newtheorem{definition}[theorem]{Definition}
\begin{document}

\lstset{language=Python}

\title{Optimal Angular Resolution for Face-Symmetric Drawings}

\author{David Eppstein and Kevin A. Wortman \\
Department of Computer Science \\
Universitiy of California, Irvine \\
\emph{\{eppstein, kwortman\}@ics.uci.edu}}

\maketitle

\begin{abstract}
Let $G$ be a graph that may be drawn in the plane in such a way that all internal faces are centrally symmetric convex polygons. We show how to find a drawing of this type that maximizes the angular resolution of the drawing, the minimum angle between any two incident edges, in polynomial time, by reducing the problem to one of finding parametric shortest paths in an auxiliary graph.  The running time is at most $O(t^3)$, where $t$ is a parameter of the input graph that is at most $O(n)$ but is more typically proportional to $\sqrt{n}$.
\end{abstract}

\section{Introduction}
Angular resolution, the minimum angle between any two edges at the
same vertex, has been recognized as an important aesthetic criterion
in graph drawing since its introduction by Malitz and Papakostas in
1992~\cite{MalPap-STOC-92}. Much past work on angular resolution in
graph drawing has focused on bounding the resolution by some function
of the vertex degree or other related quantities, rather than on exact optimization; however, recently,
Eppstein and Carlson~\cite{EppCar-GD-06} showed that, when drawing
trees in such a way that all faces form (infinite) convex polygons,
the optimal angular resolution may be found by a simple linear time
algorithm. The resulting drawings have the convenient property that
the lengths of the tree edges may be chosen arbitrarily, keeping fixed
the angles selected by the optimization algorithm, and no crossing can
occur; therefore, one may choose edge lengths either to achieve other
aesthetic goals such as good vertex spacing or to convey additional
information about the tree.

In this paper, we consider similar problems of calculating edge angles
with optimal angular resolution for a more complicated class of graph
drawings: planar straight line drawings in which each internal face of
the drawing is a centrally-symmetric convex polygon. In previous
work~\cite{Epp-GD-04}, we investigated drawings of this type, which we called \emph{face-symmetric drawings}.  We characterized them as the duals of weak pseudoline arrangements in the
plane, and described an algorithm that finds a face-symmetric drawing
(if one exists) in linear time, based on an SPQR-tree decomposition of
the graph. Graphs with face-symmetric drawings are automatically
\emph{partial cubes} (or, equivalently, media~\cite{EppFalOvc-07}),
graphs in which the vertices may be labeled by bitvectors in such a
way that graph distance equals Hamming distance;
see~\cite{EppFalOvc-07} for the many applications of graphs of this
type.

\begin{figure}[t]
\centering
\includegraphics[width=3in]{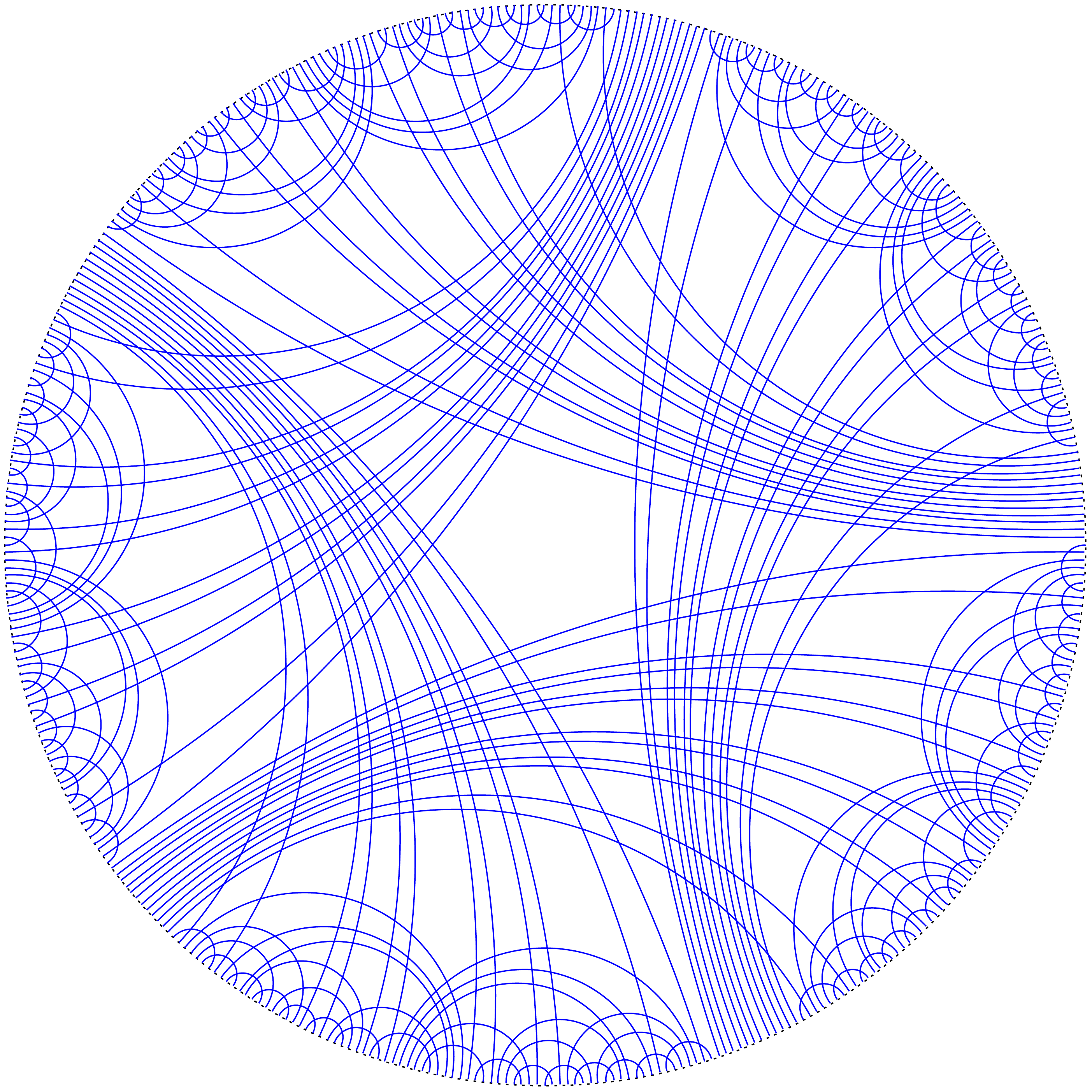}
\qquad
\includegraphics[width=3in]{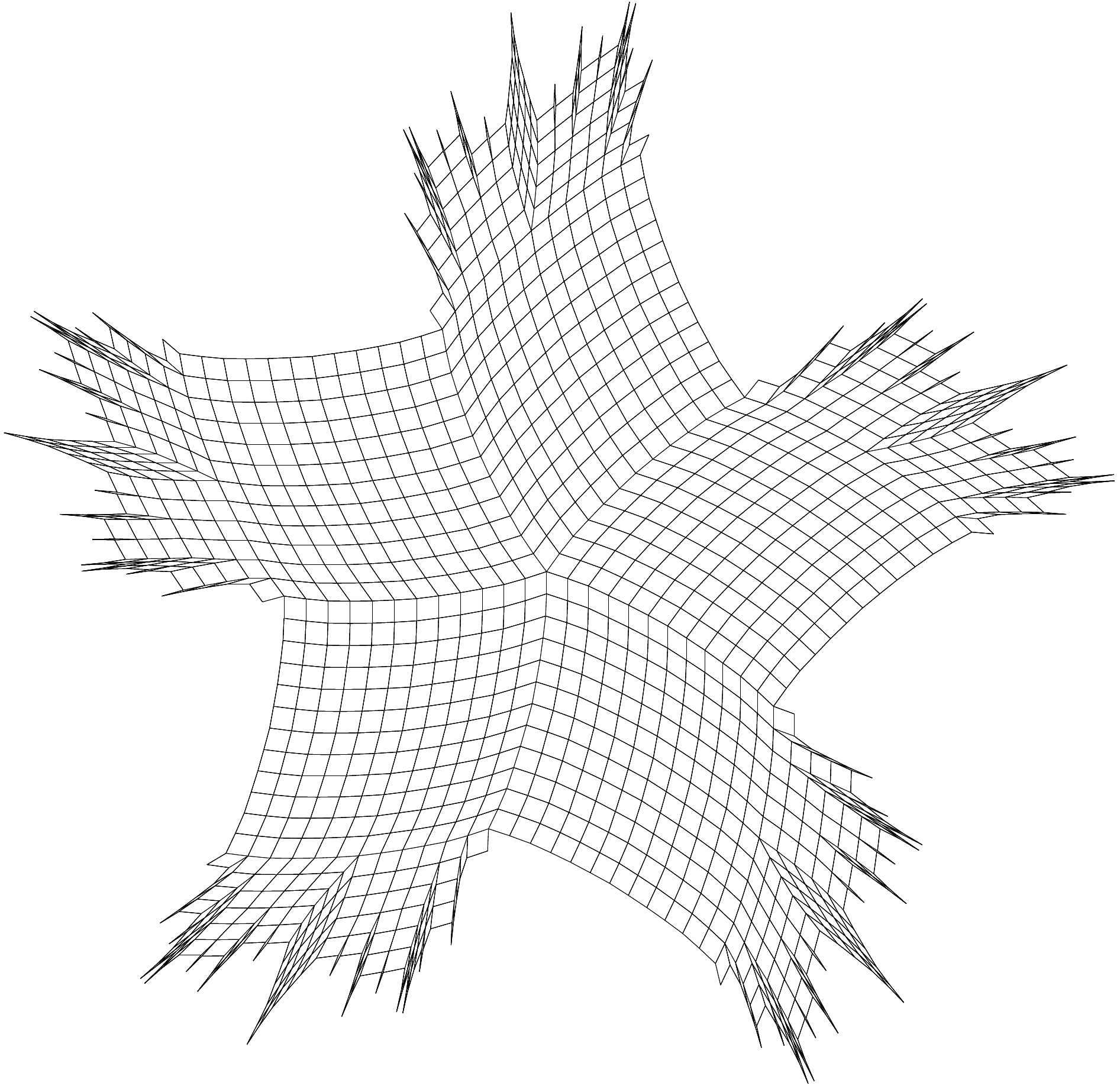}
\caption{A hyperbolic line arrangement (left), and its dual
squaregraph (right), drawn by our previous un-optimized algorithm with
all faces rhombi.}
\label{fig:old-220}
\end{figure}

Figure~\ref{fig:old-220} depicts an example, a hyperbolic line
arrangement with no three mutually intersecting lines, the
intersection graph of which requires five colors, as constructed by
Ageev~\cite{Age-DM-96}, and the planar dual graph of the arrangement.
The dual of a hyperbolic line arrangement with no three mutually
intersecting lines, such as this one, is a \emph{squaregraph}, a type
of planar median graph in which each internal face is a quadrilateral
and each internal vertex is surrounded by four or more
faces~\cite{CheDraVax-SODA-02, bendelt_chepoi_eppstein}. Any squaregraph may be
drawn in such a way that its faces are all rhombi, by our previous
algorithm~\cite{Epp-GD-04}, and the drawing produced in this way is
shown in the figure. However, note that some rhombi have such sharp
angles that they appear only as line segments in the figure, making
them difficult to view. Other rhombi, even those near the edge of the
figure, are drawn with overly wide angles, making them very legible
but detracting from the legibility of other nearby rhombi. Thus, we
are led to the problem of spreading out the angles more uniformly
across the drawing, in such a way as to optimize its angular
resolution, while preserving the property that all faces are rhombi.

As we show, this problem of optimizing the angular resolution of a
face-symmetric drawing may be solved in polynomial time, by
translating it into a problem of finding parametric shortest paths in
an auxiliary network. In the parametric shortest path problem, each
edge of a network is given a length that is a linear function of a
parameter $\lambda$; substituting different values of $\lambda$ into
these functions gives different real weights on the edges and
therefore different shortest path problems~\cite{Gus-JACM-83}. The
number of different shortest paths formed in this way, as $\lambda$ varies, may
be superpolynomial~\cite{Car-84}, but fortunately for our problem we
do not need to find all shortest paths for all values of $\lambda$. Rather, the
optimal angular resolution we seek may be determined as the maximum
value of $\lambda$ for which the associated network has no negative
cycles, and the drawing itself may be constructed using distances from
the source in this network at the critical value of $\lambda$. The
linear functions of our auxiliary network have a special
structure---each is either constant or a constant minus
$\lambda$---that allows us to apply an algorithm of Karp and
Orlin~\cite{KarOrl-DAM-81} for solving this variant of the parametric
shortest path problem, and thereby to optimize in polynomial time the
angular resolution of our drawings.

\section{Drawings with symmetric faces}
\label{section:drawings}

The algorithm of~\cite{Epp-GD-04} can be used to find a non-optimal face-symmetric
drawing, when one exists, in linear time; therefore, in the algorithms
here we will assume that such a drawing has already been given. We
summarize here the relevant properties of the drawings produced in
this way.

In a face-symmetric drawing, any two opposite edges of any face must
be parallel and of equal length. The transitive closure of this
relation of being opposite on the same face partitions the edges of
the drawing into equivalence classes, which we call \emph{zones}; any
zone consists of a set of edges that all have equal angles and equal
lengths. (Note, however, that edges in different zones may also have
equal angles and equal lengths.) The collection of line segments
connecting opposite pairs of edge midpoints within each face forms a
collection of curves which can be extended to infinity to form a
\emph{weak pseudoline arrangement}; each zone is formed by the drawing
edges that cross one of the curves of this arrangement. This
construction is depicted in Figure~\ref{fig:wpla}.

\begin{figure}[t]
\centering\includegraphics[scale=.6]{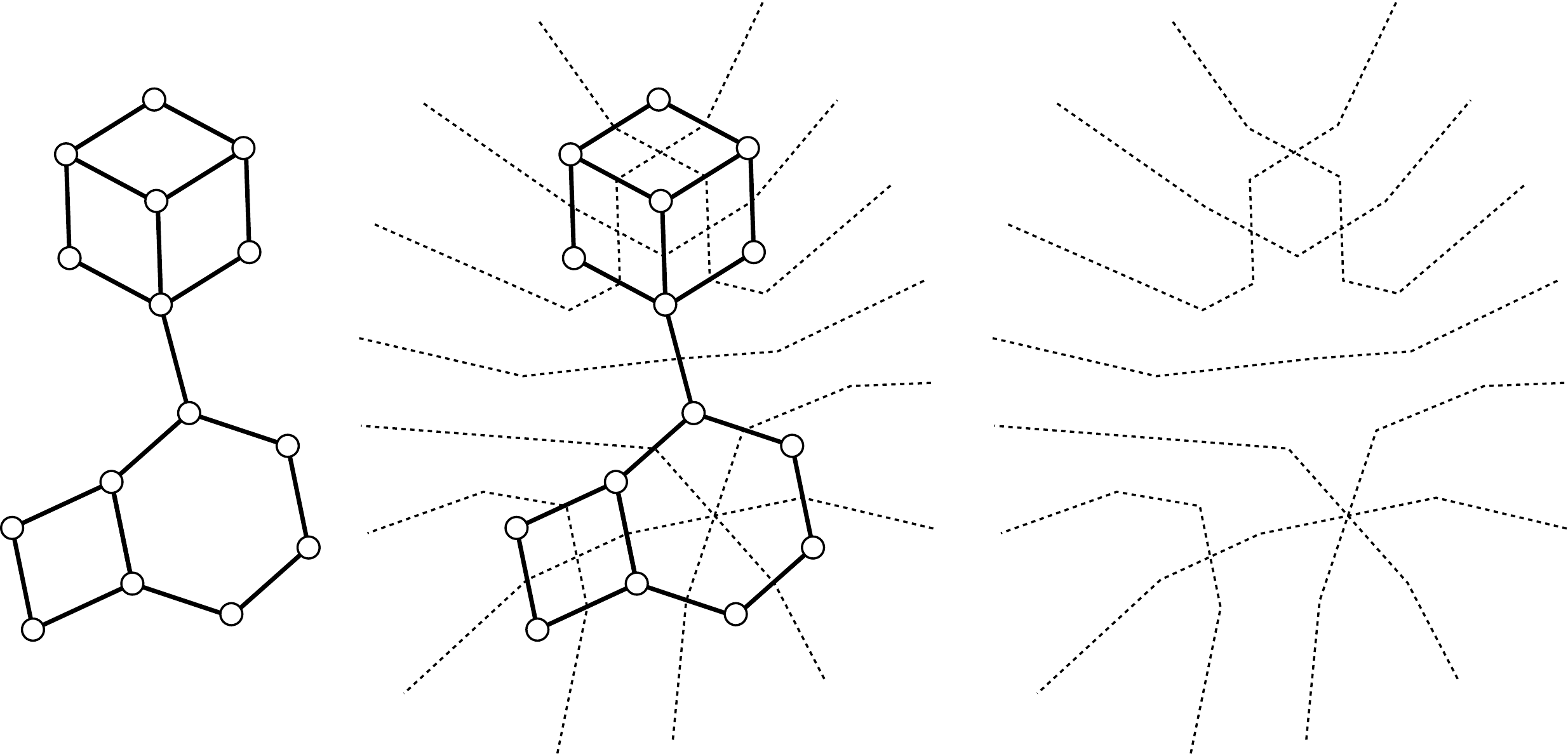}
\caption{Converting a face-symmetric drawing into a weak pseudoline
arrangement, by drawing line segments connecting opposite pairs of
edge midpoints within each face. From~\cite{Epp-GD-04}.}
\label{fig:wpla}
\end{figure}

Thus, the positions of all the vertices of the drawing are determined,
up to translation of the whole drawing, by a choice of a vector for each zone that specifies the direction and length of the zone's edges. For, if an arbitrarily chosen base vertex has its
placement fixed at the origin, the position of any other vertex $v$
must be the sum of the vectors corresponding to the weak pseudolines
that separate $v$ from the base vertex. All vertex positions may be
computed by performing a depth first search of the graph, setting the
position of each newly visited vertex $v$ to be the position of $v$'s
parent in the depth first search tree plus the vector for the zone
containing the edge connecting $v$ to its parent.

The algorithm from~\cite{Epp-GD-04} determines the vectors for each
zone as follows. Associate a unit vector with each end of each of the
pseudolines dual to the drawing, in such a way that these unit vectors
are equally spaced around the unit circle in the same cyclic order as
the order in which the pseudoline ends extend to infinity. The vector
associated with each zone is then simply the difference of the two
unit vectors associated with the corresponding pseudoline's ends,
normalized so that it is itself a unit vector.

As shown in \cite{Epp-GD-04}, this choice of a vector for each zone
leads to a face-symmetric drawing without crossings. Additionally, it
has two more properties, both of which are important and will be preserved by our optimization algorithm. First, if the planar embedding
chosen for the given graph has any symmetries, they will be reflected
in the dual pseudoline arrangement, in the choice of zone vectors, and
therefore in the resulting drawing.

Second, although it may not be possible to draw the given graph in
such a way that the outer face is convex, its concavities are all
mild. This can be measured by defining the \emph{winding number} of
any point $p$ on the boundary of the drawing, with respect to any
other point $q$ also on the boundary, as the sum of the turning angles
between consecutive edges along a path counterclockwise around the
boundary from $q$ to $p$. For any simple polygonal boundary, the
winding number from $p$ to $q$ is $2\pi$ minus the turning angle from
$q$ to $p$. In a convex polygon, all winding numbers would lie in the
range $[0,2\pi]$; in the drawings produced by this method, they
instead lie in the range $[-\pi,3\pi]$. That is, intuitively, the
sides of any concavity may be parallel but may not turn back towards
each other. It follows from this property that the vectors of each
zone may be scaled independently of each other, preserving only their
relative angles, and the resulting drawing will remain
planar~\cite{Epp-GD-04}. In particular, the step in the algorithm
of~\cite{Epp-GD-04} in which the zone vectors are normalized to unit
length leads to a planar drawing. A similar constraint on winding
numbers was used in the algorithm of~\cite{EppCar-GD-06} for finding
the optimal angular resolution of a tree drawing, and again led to the
ability to adjust edge lengths arbitrarily while preserving the
planarity of the drawing. Indeed, tree drawings may be seen as a very
special case of face-symmetric drawings in which there are no internal
faces to be symmetric.

Thus, we may formalize the problem to be solved, as follows: given a
face-symmetric drawing of a graph $G$, described as a partition of the
edges of $G$ into zones and a zone vector for each zone, we wish to
find a new set of zone vectors to use for a new drawing of $G$,
preserving the relative orientation of any two edges that meet at a
vertex of $G$, maintaining the constraint that all winding numbers lie
in the range $[-\pi,3\pi]$, and maximizing the angular resolution of
the resulting drawing.

\section{Algorithmic results}
\label{section:algorithm}

In this section we describe a polynomial time algorithm for the problem at hand.  As described above, the core of our algorithm is a routine that maps a planar face-symmetric drawing $G$ to an auxiliary graph $A$, such that the auxiliary graph may be used as input to a parametric shortest paths algorithm of Karp and Orlin \cite{KarOrl-DAM-81}, and the output of that algorithm describes a drawing $G'$ isomorphic to $G$ and with maximal angular resolution.  The algorithm of \cite{KarOrl-DAM-81} runs in polynomial time, so what remains to be seen is the correctness and running time of our mapping routine.

Table \ref{table:relationships} summarizes the intuitive relationships between concepts in the drawings $G$ and $G'$ and their representation in the auxiliary graph $A$.  As shown in the table, edge zones correspond to auxiliary graph vertices; angular resolution corresponds to the parametric variable $\lambda$; the change in angle of zones between $G$ and $G'$ corresponds to lengths of shortest paths in $A$; and constraints on the angles are represented by edges in $A$.  Figure \ref{figure:example_input} shows a small example input graph $G$ with five zones, and Figure \ref{figure:example_aux} shows the corresponding auxiliary graph $A$.

\begin{table}
\centering
\begin{tabular}{|p{3in}|p{3in}|}
\hline
\bf{Concept in drawing} & \bf{Representation in auxiliary graph} \\ \hline \hline
Zone $z_i$ & Vertex $v_i$ \\ \hline
Angular resolution & Parametric variable $\lambda$ \\ \hline
$\alpha$ is feasible angular resolution & No negative cycles when $\lambda \leq \alpha$ \\ \hline
Difference between angles of vectors for zone $z_i$ in unoptimized and optimized drawings & Shortest path distance $d(v_i)$ from $s$ to $v_i$ \\ \hline
Angle between $z_i$ and $z_j$ is $\geq \lambda$ & Edge with weight $\theta_G(z_i)-\theta_G(z_j)-\lambda$ \\ \hline
Interior faces are convex & Edge with weight $\pi + \theta_G(z_i)-\theta_G(z_j)$ \\ \hline
Exterior boundary is mildly convex & Opposing edges with weight  $w(v_i,v_j)= 3\pi + \theta_G(z_i) - \theta_G(z_j)$ and $w(v_j,v_i)=\pi+\theta_G(z_j)-\theta_G(z_i)$ \\ \hline
\end{tabular}
\caption{Relationships between drawing concepts and auxiliary graph components.}
\label{table:relationships}
\end{table}

\begin{figure}[t]
\centering
\vspace{.5in}
\includegraphics[width=4in]{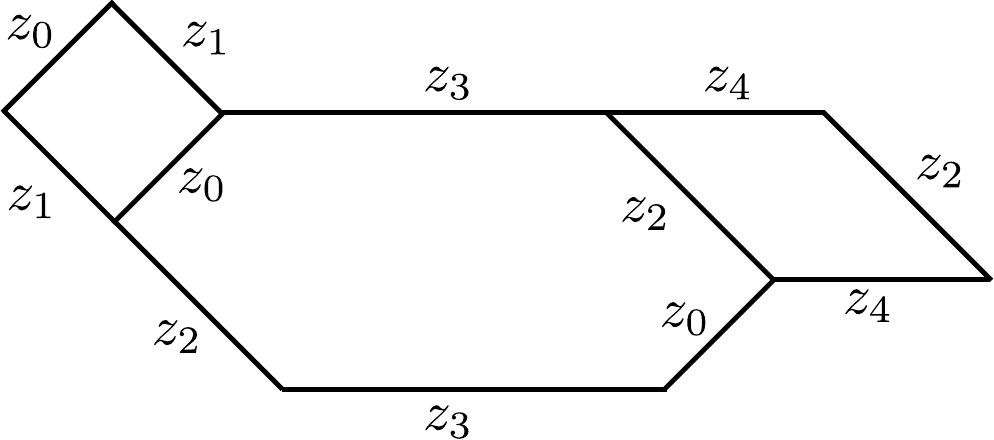} \hspace{.5in} \includegraphics[width=1.5in]{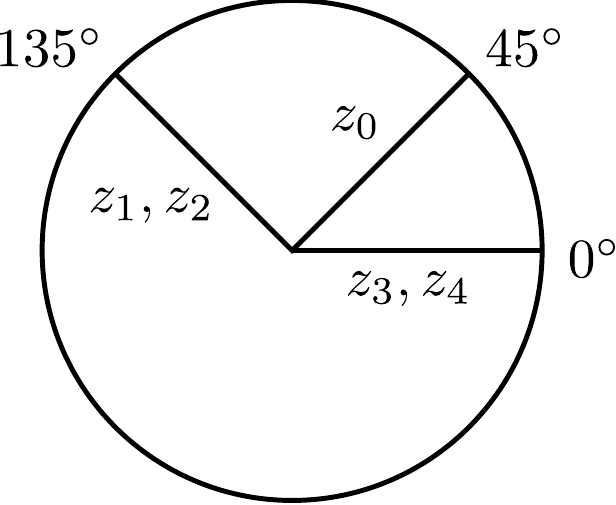}
\caption{Example input graph $G$, using degrees rather than radians.}
\label{figure:example_input}
\end{figure}

\begin{figure}[t]
\centering
\includegraphics[width=6in]{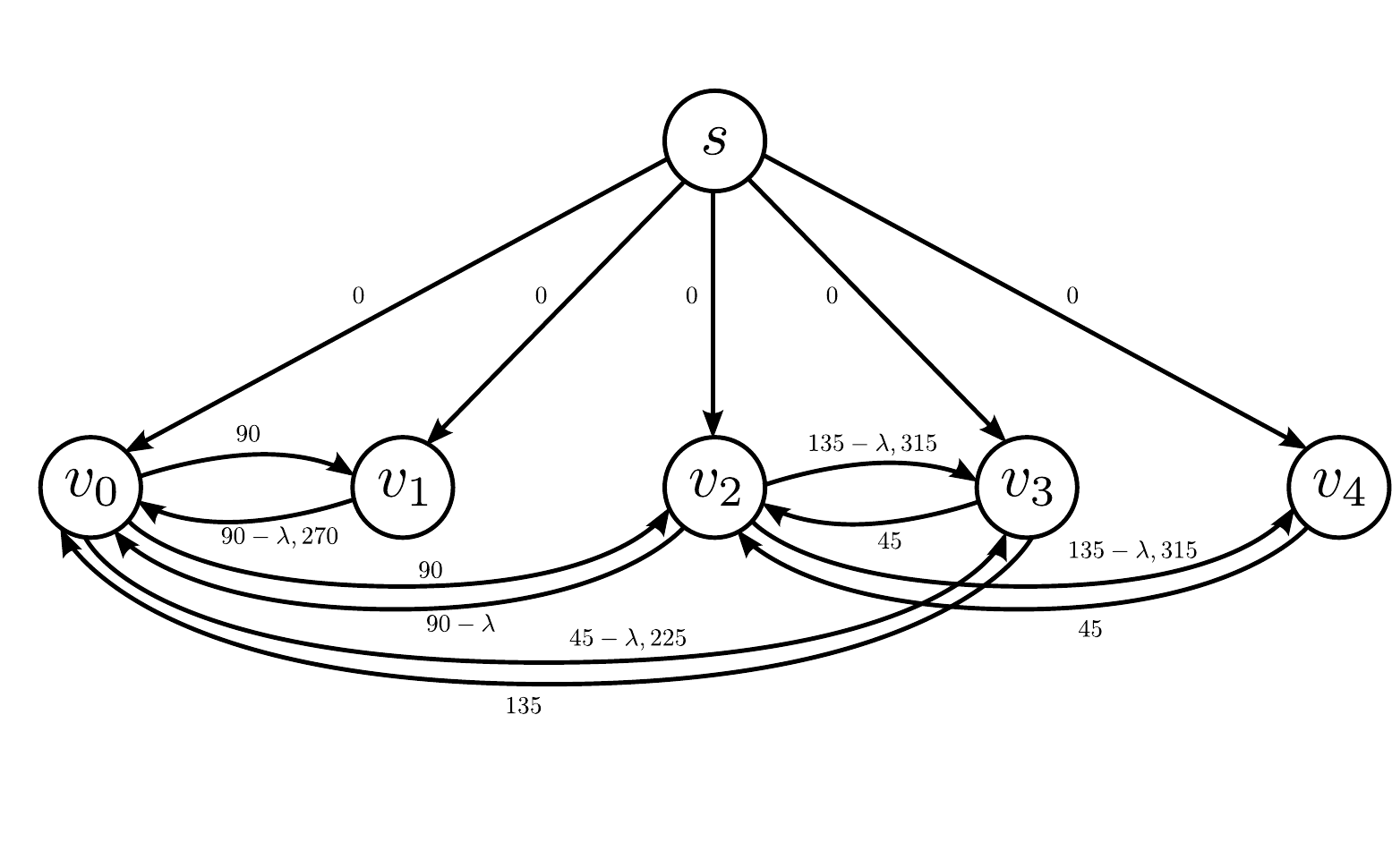}
\caption{Auxiliary graph $A$ for the input shown in Figure \ref{figure:example_input}, again using degrees.  Only the lowest-weight edge between any pair of vertices is shown.}
\label{figure:example_aux}
\end{figure}

We now define our notation formally and show that our reduction is correct.

\begin{definition}
Let
\begin{itemize}
\item $G$ be the planar face-symmetric graph and its drawing given as input;
\item $Z=\{z_0,z_1,\ldots\}$ be the zones of $G$;
\item $t=|Z|$ be the number of zones in $G$;
\item $A$ be a weighted, directed, auxiliary graph, such that every zone $z_i \in Z$ corresponds to a vertex $v_i$ in $A$, and every edge of $a$ has weight $w(v_i,v_j) = b_{v,w} - m_{v,w} \cdot \lambda$, where $b \in \mathbb{R}$ and $m_{v,w} \in \{0,1\}$ are per-edge constants, and $\lambda$ is a graph-wide variable;
\item $s$ be a special start vertex in $A$;
\item $A$ have edges $(v_i,v_j)$ and $(v_j,v_i)$ for every pair of vertices $v_i,v_j$;
\item $\lambda^*$ be the largest value of $\lambda$ such that $A$ contains no negative cycles;
\item $d(v_i)$ be the weight of the shortest path from $s$ to $v_i$ in $A$ when $\lambda=\lambda^*$;
\item $\theta_G(z_i)$ be the angle assigned to edges of zone $z_i$ in $G$, in radians counterclockwise from the $x$-axis;
\item $\angle_{G}(i,j) = \theta_G(z_j)-\theta_G(z_i)$ be the counterclockwise turning angle from $\theta_G(z_i)$ to $\theta_G(z_j)$ in $G$;
\item for every pair of distinct zones $z_i, z_j$, such that $\theta_G(z_i) \leq \theta_G(z_j)$ and there exists a face in $G$ with some edge from $z_i$ and some edge from $z_j$, $A$ contains the following edges:
   \begin{itemize}
   \item an edge $(v_j,v_i)$ with weight $w(v_j,v_i)=\theta_G(z_j)-\theta_G(z_i)-\lambda$,
   \item if a corresponding angle in $G$ is interior, an edge $(v_i,v_j)$ with weights $w(v_i,v_j)=\pi+\theta_G(z_i)-\theta_G(z_j)$, and
   \item if a corresponding angle in $G$ is exterior, opposing edges with weights $w(v_i,v_j)= 3\pi + \theta_G(z_i) - \theta_G(z_j)$ and $w(v_j,v_i)=\pi+\theta_G(z_j)-\theta_G(z_i)$;
   \end{itemize}
\item $G'$ be the output drawing of $G$;
\item $\theta_{G'}(z_i) = d(v_i)+\theta_{G}(z_i)$ be the angle assigned to edges in zone $z_i$ in the output drawing $G'$, in radians counterclockwise from the $x$-axis;
\item and $\angle_{G'}(i,j)=\theta_{G'}(z_j)-\theta_{G'}(z_i)$ be the angle between $z_i$ and $z_j$ in $G'$, analogous to $\angle_G(i,j)$ in $G$.
\end{itemize}
\end{definition}

\begin{lemma}
\label{lemma:path_lengths}
If $A$ contains any edge from $v_i$ to $v_j$ with weight $w(v_i,v_j)=W$, then $d(v_j) \leq d(v_i)+W$.
\end{lemma}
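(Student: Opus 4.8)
The plan is to treat this as the standard triangle inequality for shortest-path distances, being careful only at the two points where the present setting differs from the textbook one: the ``distances'' $d(\cdot)$ are taken at the specific parameter value $\lambda=\lambda^*$, and some vertices may be unreachable from $s$.

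First I would check that $d(\cdot)$ is genuinely well defined, i.e.\ that $A$ has no negative cycle when $\lambda=\lambda^*$ and not merely for $\lambda<\lambda^*$. Every edge weight has the form $b-m\lambda$ with $m\in\{0,1\}$, so the weight of a directed cycle $C$ is $\bigl(\sum_{e\in C}b_e\bigr)-\bigl(\sum_{e\in C}m_e\bigr)\lambda$, a non-increasing affine function of $\lambda$. Thus each constraint ``$C$ has non-negative weight'' is either independent of $\lambda$ or of the form $\lambda\le c_C$ for a constant $c_C$, so the set of $\lambda$ for which all cycles are non-negative is closed and contains its supremum $\lambda^*$. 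Consequently, at $\lambda=\lambda^*$ there is no negative cycle, shortest-path distances from $s$ are attained by simple paths, and $d(v_i)$ is a well-defined element of $\mathbb{R}\cup\{+\infty\}$, finite exactly when $v_i$ is reachable from $s$.

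Then comes the inequality itself, by the usual relaxation argument. If $d(v_i)=+\infty$ there is nothing to prove, so assume $d(v_i)$ is finite and let $P$ be a shortest $s$-to-$v_i$ path, of weight $d(v_i)$. Appending the edge $(v_i,v_j)$ of weight $W$ gives an $s$-to-$v_j$ walk of total weight $d(v_i)+W$. Since $A$ has no negative cycle at $\lambda^*$, deleting its directed cycles one at a time never increases the weight, so we extract a simple $s$-to-$v_j$ path of weight at most $d(v_i)+W$; hence $d(v_j)\le d(v_i)+W$.

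I expect the only real subtlety to be the first step, namely confirming that the maximiser $\lambda^*$ is itself feasible so that $d(\cdot)$ is meaningful; once that is settled, the rest is routine and could even be stated in one line, since $d(v_j)$ is by definition the minimum weight over all $s$-to-$v_j$ paths and the (cycle-reduced) concatenation of $P$ with the edge $(v_i,v_j)$ is one such path.
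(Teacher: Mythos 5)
Your proof is correct and follows essentially the same route as the paper's: append the edge $(v_i,v_j)$ to a shortest $s$-to-$v_i$ path and invoke the absence of negative cycles at $\lambda=\lambda^*$. The extra care you take---verifying that the feasible set of $\lambda$ is closed so that $\lambda^*$ itself admits no negative cycle, and handling unreachable vertices---is a welcome tightening of a step the paper simply asserts.
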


\begin{proof}
The shortest path to $v_i$ in $A$, followed by the edge $(v_i,v_j)$, is a path to $v_j$ with total weight $d(v_i)+W$.  Since $d$ is defined for $\lambda=\lambda^*$,  $A$ has no negative cycle; so the shortest path to $v_j$ has weight $d(v_j) \leq d(v_i)+W$.
\end{proof}

\begin{lemma}
\label{lemma:angles}
$G'$ has angular resolution $\lambda^*$, and every interior face of $G'$ is non-concave.
\end{lemma}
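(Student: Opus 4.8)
The plan is to establish the two assertions separately, in each case converting edge weights of $A$ into geometric facts about $G'$ by repeated use of Lemma~\ref{lemma:path_lengths}; throughout I note that $\lambda^*>0$, since $G$ itself, with all shortest-path distances replaced by $0$, is feasible for $\lambda$ equal to the positive angular resolution of $G$. First, \emph{angular resolution at least $\lambda^*$}: any two edges incident at a common vertex of $G'$ bound a common face, so their zones $z_i,z_j$ — say with $\theta_G(z_i)\le\theta_G(z_j)$ — occur together on that face, and $A$ contains the edge $(v_j,v_i)$ of weight $\angle_G(i,j)-\lambda$. Applying Lemma~\ref{lemma:path_lengths} at $\lambda=\lambda^*$ and substituting $\theta_{G'}(z_k)=d(v_k)+\theta_G(z_k)$ gives $\angle_{G'}(i,j)=(d(v_j)-d(v_i))+\angle_G(i,j)\ge\lambda^*$; by the orientation conventions built into the definition this turning angle is the angle between the two incident edges, so every angle of $G'$ is at least $\lambda^*$.

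Second, \emph{angular resolution at most $\lambda^*$}: by maximality of $\lambda^*$, every $\lambda>\lambda^*$ makes $A$ contain a negative cycle. Each cycle weight is affine in $\lambda$ with slope $-\sum m_{ij}\le 0$, and there are finitely many cycles, so some cycle $C$ has weight exactly $0$ at $\lambda=\lambda^*$ and strictly negative slope; hence $C$ uses at least one zone-pair edge $(v_j,v_i)$, these being the only edges with $m_{ij}=1$. Since $d$ satisfies the inequalities of Lemma~\ref{lemma:path_lengths} and $C$ has total weight $0$, summing those inequalities around $C$ forces each to hold with equality, so $d(v_i)-d(v_j)=\angle_G(i,j)-\lambda^*$ and hence $\angle_{G'}(i,j)=\lambda^*$. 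Because $\angle_{G'}$ is additive along the zones of a face and all turning angles of $G'$ are at least $\lambda^*>0$, the zones $z_i,z_j$ cannot have an intervening zone on that face (which would force $\angle_{G'}(i,j)\ge 2\lambda^*$); so they are consecutive there, and $G'$ realizes an angle of exactly $\lambda^*$ at a vertex.

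Third, \emph{interior faces non-concave}: fix an interior face $F$ and traverse its boundary. At each vertex the incoming and outgoing edges lie in zones $z_i,z_j$ for which $A$ contains both the zone-pair edge and the interior-angle edge $(v_i,v_j)$ of weight $\pi+\theta_G(z_i)-\theta_G(z_j)$; applying Lemma~\ref{lemma:path_lengths} to these two edges yields $\lambda^*\le\angle_{G'}(i,j)\le\pi$, i.e. the turning angle of $F$ at that vertex lies in $(0,\pi]$ in $G'$. Moreover each such turning angle changes from $G$ to $G'$ only by the amount $d(v_j)-d(v_i)$, and these increments telescope to $0$ around the closed boundary of $F$; hence the turning angles of $F$ in $G'$ are all positive and still sum to $2\pi$, their total in the convex drawing $G$. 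A polygon all of whose turning angles are positive and sum to $2\pi$ is convex, so $F$ is non-concave.

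The step I expect to be the main obstacle is the bookkeeping of orientation conventions: making the algebraic quantity $\angle_{G'}(i,j)=\theta_{G'}(z_j)-\theta_{G'}(z_i)$ agree with the geometric angle between two edges at a vertex, and with the geometric turning angle of a face, while tracking which end of each zone participates and the $\pm\pi$ ambiguity in edge directions. This is exactly what the split into ``interior'' versus ``exterior'' angles in the definition is designed to handle, and the argument must invoke it consistently; the remaining content is a routine application of Lemma~\ref{lemma:path_lengths}.
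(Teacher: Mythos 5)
Your proof is correct and follows essentially the same route as the paper's: both inequalities ($\angle_{G'}(i,j)\ge\lambda^*$ from the $-\lambda$ edges, and $\angle_{G'}(i,j)\le\pi$ from the interior-angle edges) are obtained exactly as in the paper by applying Lemma~\ref{lemma:path_lengths} and substituting $\theta_{G'}(z_k)=d(v_k)+\theta_G(z_k)$. The one genuine addition is your second part, where you show some angle equals $\lambda^*$ exactly by locating a cycle of weight zero at $\lambda=\lambda^*$ with strictly negative slope and forcing equality in the shortest-path inequalities around it; the paper's written proof only establishes the lower bound and leaves tightness implicit (it follows indirectly from the maximality argument in the theorem), so this is a legitimate strengthening rather than a divergence.
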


\begin{proof}
Let $e_i$ and $e_j$ be any pair of edges in $G$ that form an angle on some interior or exterior face.  If $e_i$ has the lesser absolute angle measure, then by construction there exists an edge $(v_j,v_i)$ with weight $w(v_j,v_i)=\theta_G(z_j)-\theta_G(z_i)-\lambda^*$, where $v_i$ and $v_j$ are the zones for $e_i$ and $e_j$, respectively.  Thus by Lemma \ref{lemma:path_lengths},
\begin{eqnarray*}
d(v_i) &\leq& d(v_j)+w(v_j,v_i) \\
d(v_i)-d(v_j) &\leq& (\theta_G(z_j)-\theta_G(z_i)-\lambda^*) \\
-d(v_i)+d(v_j) &\geq& -\theta_G(z_j)+\theta_G(z_i)+\lambda^* \\
d(v_j)+\theta_G(z_j)-d(v_i)-\theta_G(z_i) &\geq&\lambda^* \\
(\theta_{G'}(z_j))-(\theta_{G'}(z_i)) &\geq& \lambda^* \\
\angle_{G'}(i,j) &\geq& \lambda^* ;
\end{eqnarray*}
\noindent so every corner angle in $G'$ has measure no less than $\lambda^*$.

Now suppose $\angle e_i e_j$ is interior.  Then by construction there also exists an opposing edge $(v_i,v_j)$ with weight $w(v_i,v_j)=\pi+\theta_G(z_i)-\theta_G(z_j)$.  So again by Lemma \ref{lemma:path_lengths},
\begin{eqnarray*}
d(v_j) &\leq& d(v_i)+(\pi+\theta_G(z_i)-\theta_G(z_j)) \\
d(v_j)+\theta_G(z_j)-d(v_i)-\theta_G(z_i) &\leq& \pi \\
(\theta_{G'}(z_j))-(\theta_{G'}(z_i)) &\leq& \pi \\
\angle_{G'}(i,j) &\leq& \pi ,
\end{eqnarray*}
\noindent which implies that $G'$ contains no concave face.
\end{proof}

\begin{lemma}
\label{lemma:winding_numbers}
The winding number of any $p$ and $q$ on the boundary of $G'$ is in the range $[-\pi,3\pi]$.
\end{lemma}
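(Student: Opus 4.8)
The plan is to mirror the proof of Lemma~\ref{lemma:angles}, upgrading the single auxiliary edge used there into a path of auxiliary edges that tracks an entire sub-arc of the boundary. First I would reduce the statement to a single inequality. By Lemma~\ref{lemma:angles} the drawing $G'$ has the same combinatorial boundary walk as $G$ with no corner degenerate, so the outer boundary of $G'$ is a closed polygon whose turning angles still sum to $2\pi$; consequently $W_{G'}(q,p)+W_{G'}(p,q)=2\pi$ for every pair of boundary points, and it suffices to prove $W_{G'}(q,p)\ge-\pi$ for all $q,p$, since the bound $W_{G'}(q,p)\le 3\pi$ is then exactly the lower bound applied to the pair $(p,q)$. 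Informally, this says that no concavity of the boundary of $G'$ is deep enough that its two sides turn back toward each other.

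To prove $W_{G'}(q,p)\ge-\pi$, let $\beta$ be the boundary arc from $q$ to $p$ read counterclockwise, starting in a zone $z_a$ and ending in a zone $z_b$. Its winding number is the telescoping sum of the turning angles at the boundary corners it crosses, and each such turning angle, being the difference of two consecutive oriented edge directions reduced into $(-\pi,\pi]$, agrees with the corresponding difference $\theta_{G'}(z_j)-\theta_{G'}(z_i)$ of zone angles up to a multiple of $\pi$ that is determined by the planar embedding; summing, the winding number of $\beta$ equals $\theta_{G'}(z_b)-\theta_{G'}(z_a)$ up to a fixed multiple of $2\pi$, and the identical identification holds in $G$, so $W_{G'}(q,p)$ and $W_G(q,p)$ differ by $d(v_b)-d(v_a)$ together with that multiple of $2\pi$. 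Since each corner of $\beta$ is an exterior corner incident to two zones that share a face, the auxiliary graph contains, between the vertices of those two zones, the two exterior edges and the universal ``angle $\ge\lambda$'' edge with the weights prescribed in the definition; walking $\beta$ in $A$ and composing the inequalities of Lemma~\ref{lemma:path_lengths} along it, choosing at each step the cheapest available edge according to the $\theta_G$-order of the two incident zones, yields after the weights telescope a bound on $d(v_b)-d(v_a)$ which, combined with $W_G(q,p)\ge-\pi$, gives $W_{G'}(q,p)\ge-\pi$.

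The main obstacle is the bookkeeping of the multiples of $2\pi$: the same sum of zone-angle differences represents winding numbers differing by $2\pi$ according to how many full turns the arc makes, and Lemma~\ref{lemma:angles} bounds an exterior corner angle only within $[-\pi,3\pi]$ rather than within one $2\pi$-period, so a corner could a priori rotate far enough to change that count. One way to control this is to follow the drawing through the one-parameter family obtained by letting the parameter increase from $0$ (where all edge weights are nonnegative, so $d\equiv 0$ and the drawing coincides with $G$) up to $\lambda^*$: since each cycle weight is nonincreasing in the parameter, no negative cycle occurs for any value in $[0,\lambda^*]$, so Lemma~\ref{lemma:angles} applies throughout and every corner angle stays strictly positive, which prevents any corner from passing through the degenerate configuration of antiparallel edges and hence keeps the ``rotation class'' of the boundary — and with it the multiple of $2\pi$ attached to each arc — equal to its value for $G$; alternatively one notes that the constants $3\pi$ and $\pi$ in the exterior-edge weights are chosen precisely so that the telescoped bound lands in $[-\pi,3\pi]$ regardless of this count. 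With that step settled, the remaining cases are routine: an arc consisting of a single corner, where $W_{G'}(q,p)\in(-\pi,\pi]$ automatically; the placement of $q$ or $p$ at a vertex rather than in the interior of an edge; and the possibility that a zone recurs at several edges of $\beta$, which does not disturb the telescoping.
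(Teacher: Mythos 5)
Your reduction to the single inequality $W_{G'}(q,p)\ge-\pi$ via $W_{G'}(q,p)+W_{G'}(p,q)=2\pi$ is fine, and your insistence on tracking the multiple-of-$2\pi$ ambiguity (which the paper's own two-line proof silently elides) together with the continuity-in-$\lambda$ argument for pinning down the rotation class is a genuine improvement in rigor. The gap is in the central step: composing the inequalities of Lemma~\ref{lemma:path_lengths} corner by corner along the arc $\beta$ cannot produce the stated bound. Every auxiliary edge $(v_u,v_w)$ has weight of the form $c+\theta_G(z_u)-\theta_G(z_w)$ with $c\in\{-\lambda,\pi,3\pi\}$, so a chain of $m$ such edges from $v_a$ to $v_b$ telescopes to $\theta_{G'}(z_b)-\theta_{G'}(z_a)\le\sum_k c_k$: the $\theta_G$ terms cancel but the constants accumulate. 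For the direction you need, the best per-corner constraint available is the exterior one, and an $m$-corner arc then yields only $W_{G'}(q,p)\ge -m\pi$. No choice of ``cheapest available edge'' at each corner repairs this, because there is no cancellation among the $c_k$.

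The bound actually comes from a \emph{single} pair of opposing auxiliary edges joining $v_a$ and $v_b$ directly. The construction inserts the exterior constraint edges for every pair of zones having edges on a common face, and the outer face is such a common face for any two boundary zones $z_a,z_b$ --- adjacency at a boundary corner is not required. Hence $A$ already contains $w(v_a,v_b)=3\pi+\theta_G(z_a)-\theta_G(z_b)$ and $w(v_b,v_a)=\pi+\theta_G(z_b)-\theta_G(z_a)$ (taking $\theta_G(z_a)\le\theta_G(z_b)$), and one application of Lemma~\ref{lemma:path_lengths} to each gives $-\pi\le\theta_{G'}(z_b)-\theta_{G'}(z_a)\le 3\pi$ with no dependence on the length of $\beta$; this is exactly the paper's argument. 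Your telescoping identity relating $W_{G'}(q,p)$ to $\theta_{G'}(z_b)-\theta_{G'}(z_a)$, and your continuity argument showing the $2\pi$-correction is inherited from $G$, are the right way to finish from there --- but as written, your derivation of the bound on $d(v_b)-d(v_a)$ does not go through and should be replaced by the direct-edge argument.
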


\begin{proof}
As in Lemma \ref{lemma:angles}, let $e_i$ and $e_j$ be edges in $G$ such that $\theta_G(i) \leq \theta_G(j)$.  If $e_i$ and $e_j$ form an angle on the boundary of $G$, then the corresponding vertices $v_i$ and $v_j$ in $A$ have opposing edges with weights $w(v_i,v_j)= 3\pi + \theta_G(z_i) - \theta_G(z_j)$ and $w(v_j,v_i)=\pi+\theta_G(z_j)-\theta_G(z_i)$.  Then by algebraic manipulations symmetric to those in Lemma \ref{lemma:angles},
\[ -\pi \enspace \leq \enspace \angle_{G'}(i,j) \enspace \leq \enspace 3\pi . \]
\end{proof}

\begin{theorem}
The output graph $G'$ is a planar face symmetric drawing isomorphic to the input drawing $G$, such that all boundary winding numbers lie in the range $[-\pi,3\pi]$, and the angular resolution of $G'$ is maximal for any such drawing.  Given $G$, $G'$ may be generated in $O(t^3)$ time.
\end{theorem}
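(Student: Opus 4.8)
The plan is to verify the theorem's assertions in turn, drawing on Lemmas~\ref{lemma:path_lengths}--\ref{lemma:winding_numbers} for the geometry of $G'$, on the realizability result of~\cite{Epp-GD-04} recalled in Section~\ref{section:drawings} for planarity, on a feasible-potential argument dual to the non-existence of negative cycles for optimality, and on a size estimate for $A$ together with the complexity of the Karp--Orlin algorithm for the time bound.

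\emph{$G'$ is a planar face-symmetric drawing isomorphic to $G$ with boundary winding numbers in $[-\pi,3\pi]$.} By construction $G'$ retains the vertices, edges, faces, and zone partition of $G$ and merely changes the zone vectors, so $G'$ is combinatorially isomorphic to $G$ and opposite edges of each face remain parallel and equal in length. By Lemma~\ref{lemma:angles} every interior face of $G'$ is non-concave, and a non-self-crossing polygon with parallel, equal opposite edges and no reflex angle is a centrally symmetric convex polygon; Lemma~\ref{lemma:winding_numbers} supplies the boundary bound. I would also verify $\lambda^{*}>0$, so that no face degenerates: the drawing $G$ itself corresponds to the all-zero vertex potential, which satisfies every defining inequality of $A$ whenever $\lambda$ is at most the (strictly positive) angular resolution of $G$---the non-parametric edges encode the convexity and winding bounds $G$ already enjoys, and the parametric edges encode its corner angles---so, since a feasible potential exists exactly when $A$ has no negative cycle, $\lambda^{*}>0$ and in $G'$ no two edges meeting at a vertex are parallel. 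Finally, the realizability theorem of~\cite{Epp-GD-04} states that an assignment of zone directions with convex interior faces and winding numbers in $[-\pi,3\pi]$ is realized, for every positive choice of zone lengths, by a crossing-free straight-line drawing; applying it to the directions $\theta_{G'}(z_i)$ with, say, unit lengths produces $G'$.

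\emph{Optimality of the angular resolution.} By Lemma~\ref{lemma:angles} the angular resolution of $G'$ equals $\lambda^{*}$, so it suffices to show that every competing planar face-symmetric drawing $G''$ isomorphic to $G$ with winding numbers in $[-\pi,3\pi]$ has angular resolution $\alpha\le\lambda^{*}$. The plan is to extract from $G''$ a feasible potential for $A$ at $\lambda=\alpha$. Choosing the zone-direction representatives of $G''$ on the same branches as those of $G$---legitimate because, by the pseudoline-arrangement characterization of~\cite{Epp-GD-04}, isomorphic face-symmetric embeddings determine the same cyclic order of zone directions up to a global reflection, which affects none of the quantities involved---and setting $d''(v_i):=\theta_{G''}(z_i)-\theta_G(z_i)$ and $d''(s):=0$, each defining inequality of $A$ at $\lambda=\alpha$ becomes exactly the potential inequality $d''(v_j)\le d''(v_i)+w(v_i,v_j)$: the parametric edges encode $\angle_{G''}(i,j)\ge\alpha$, the interior edges encode convexity of the faces of $G''$, and the boundary edges encode its winding bounds, all by reversing the algebra of Lemmas~\ref{lemma:angles} and~\ref{lemma:winding_numbers}. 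Hence $A$ has no negative cycle at $\lambda=\alpha$, so $\alpha\le\lambda^{*}$. I expect this to be the main obstacle: the work lies in fixing consistent real-valued representatives for the zone directions across $G$, $G'$, and $G''$ so that turning angles compose correctly, and in particular in handling the two distinct corner types at which a single pair of zones can meet on a face.

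\emph{Running time.} Each zone comprises $O(t)$ edges, since it consists of the drawing edges crossed by one curve of the dual weak pseudoline arrangement and that curve meets each of the other $t-1$ curves at most once; hence $G$ has $O(t^{2})$ edges and vertices. Then $A$ has $t+1$ vertices and $O(t^{2})$ edges, and it can be built in $O(t^{3})$ time: for each face, collect its at most $t$ distinct zones and, for each resulting pair, record whether the incident angle is interior or exterior and emit the corresponding edges. The Karp--Orlin parametric shortest path algorithm~\cite{KarOrl-DAM-81}---applicable because every edge length is a constant or a constant minus $\lambda$---then computes $\lambda^{*}$ and the distances $d(v_i)$ in $O(t^{3})$ time, and reconstructing $G'$ by setting $\theta_{G'}(z_i)=d(v_i)+\theta_G(z_i)$, choosing unit zone lengths, and placing the vertices via the depth first search of Section~\ref{section:drawings} costs a further $O(t^{2})$. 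The total is $O(t^{3})$.
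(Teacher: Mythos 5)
Your proposal is correct and follows essentially the same route as the paper: geometric correctness from Lemmas~\ref{lemma:angles} and~\ref{lemma:winding_numbers} plus the realizability result of~\cite{Epp-GD-04}, optimality by showing that any competing drawing yields a certificate that $A$ has no negative cycle at its angular resolution, and the time bound from Karp--Orlin on the $t$-vertex graph $A$. Your feasible potential $d''(v_i)=\theta_{G''}(z_i)-\theta_G(z_i)$ is exactly the reweighting $w'(v_i,v_j)=w(v_i,v_j)+\theta_H(v_i)-\theta_H(v_j)$ the paper applies (and is in fact the form in which the paper's $\theta_H$ must be read for its nonnegativity claim to hold); your added checks that $\lambda^{*}>0$ and that angle representatives are chosen consistently across drawings are welcome refinements rather than a different argument.
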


\begin{proof}
Identifying the zones of $G$ and constructing $A$ may be done naively in $O(t^2)$ time.  The value $\lambda^*$ and corresponding distances $d(v_i)$ for every vertex $v_i$ in $A$ may be reported by invoking the algorithm of Karp and Orlin.  This algorithm runs in $O(n^3)$ time, where $n$ is the number of vertices in the input; our graph $A$ has $t$ vertices, so this step takes $O(t^3)$ time.  The $d(v_i)$ distances define the zone angles $\theta_{G'}$, allowing the drawing $G'$ to be output in linear time as described in Section \ref{section:drawings}.

Our method of choosing new angles for the zone vectors implies that $G'$ is isomorphic to $G$ and every interior face of $G'$ is centrally symmetric.  By Lemmas \ref{lemma:angles} and \ref{lemma:winding_numbers}, every interior face of $G'$ is convex, and the exterior boundary of $G'$ satisfies the convexity constraint.

Finally we must show that $G'$ has maximal angular resolution.  Let $H$ be any correct drawing of the graph $G$, and let $H$ have angular resolution $\lambda_H$.  Then $A$ has no negative cycles for $\lambda=\lambda_H$: for, suppose we replace every weight $w(v_i,v_j)$ with $w'(v_i,v_j)=w(v_i,v_j)+\theta_H(v_i)-\theta_H(v_j)$, where $\theta_H(v_k)$ is the angle assigned to zone $z_k$ in $H$.  Any cycle contains an equal number of $+\theta_H(v_k)$ and $-\theta_H(v_i)$ terms for each $v_i$ in the cycle, so our transformation does not change total cycle weights, and hence preserves negative cycles.   Each edge has nonnegative weight, so no negative cycles exist in $A$.  We use the Karp-Orlin algorithm to find the largest value $\lambda^*$ for which $A$ contains no negative cycle, and by Lemma \ref{lemma:angles}, the output graph $G'$ has angular resolution $\lambda^*$.  Thus the angular resolution of $G'$ is greater than or equal to that of any correct drawing $H$.
\end{proof}

\section{Experimental results}

We implemented a simplified version of the algorithm described in Section \ref{section:algorithm} in the Python language.  Our simpler algorithm performs a numerical binary search for $\lambda^*$ rather than an implementation of the Karp-Orlin parametric search algorithm.  We make binary search decisions by generating the auxiliary graph as described in Section \ref{section:algorithm}, substituting in the value of $\lambda$ to obtain real-valued edge weights, then checking for negative cycles with a conventional Bellman-Ford shortest paths computation.  This algorithm runs in $O(t^3 \log W)$ time, where $t$ is the number of pseudolines (zones) in the drawing and $W$ is the number of bits of numerical precision used.  Wall clock time ranged from a matter of seconds for small $t$ to roughly two minutes for $t=220$ and $W=64$.

Figure \ref{figure:cd15} shows a sample of input and output for our implementation.  The input is a correct, though suboptimal, planar face-symmetric drawing of a graph with 15 pseudolines generated by the algorithm presented in ~\cite{Epp-GD-04}.  The angular resolution of the output is visibly improved.  Figure \ref{figure:cd220} shows the output of the algorithm when the 220-pseudoline graph shown in Figure \ref{fig:old-220} is used as input.  That unoptimized drawing has angular resolution $2\pi/110 \approx .0571$ radians.  The left output drawing was produced by our optimization algorithm using all correctness constraints, and has angular resolution $2\pi/75\approx .0838$ radians.  The right drawing is the result of removing the constraint that concavities have opening angles of at least $\pi$ radians, which may in general result in a nonplanar drawing, but in this case yields a planar drawing with an even greater angular resolution of $2\pi/30\approx .209$ radians.

\begin{figure}[t]
\centering
\includegraphics[width=3in]{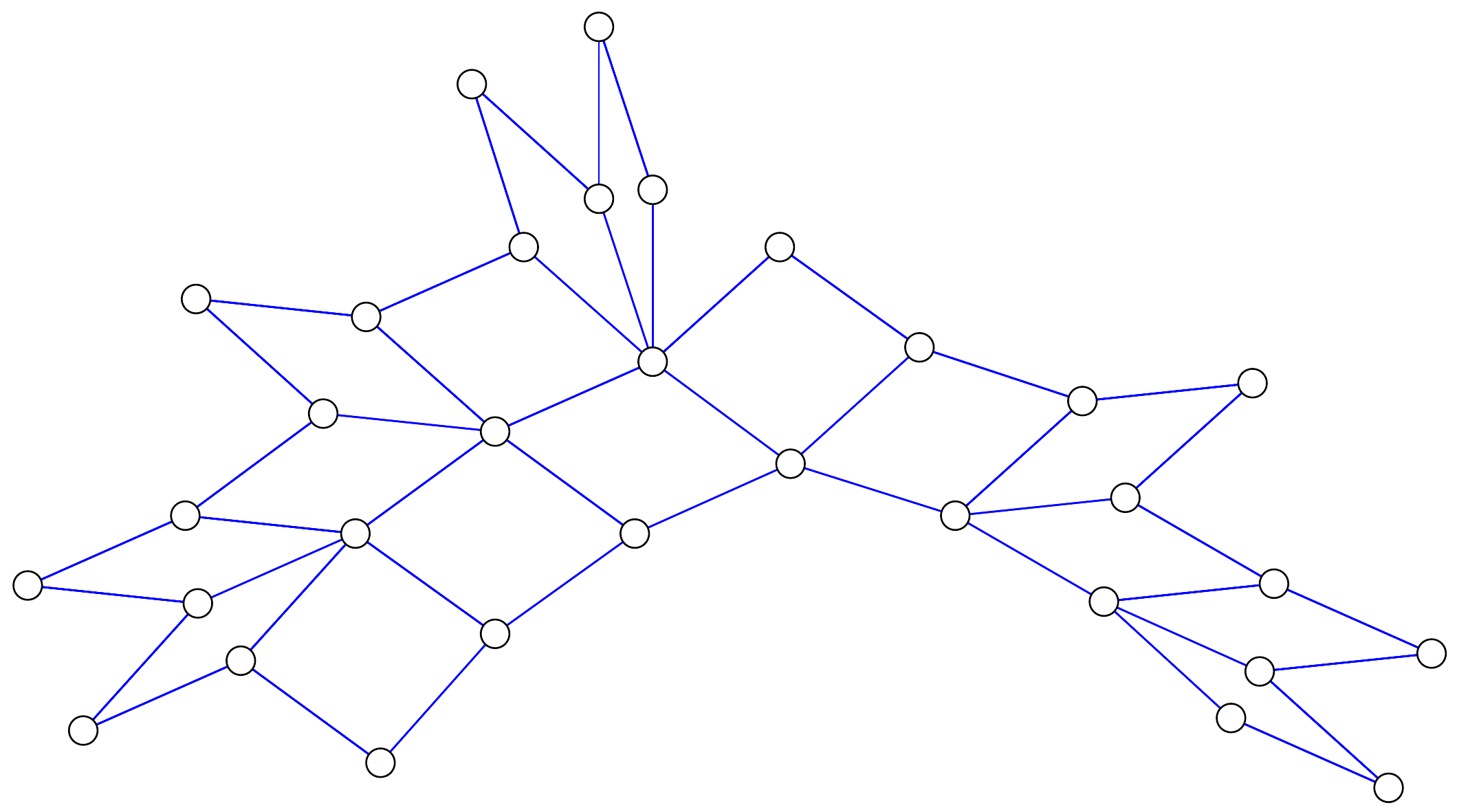} \includegraphics[width=3in]{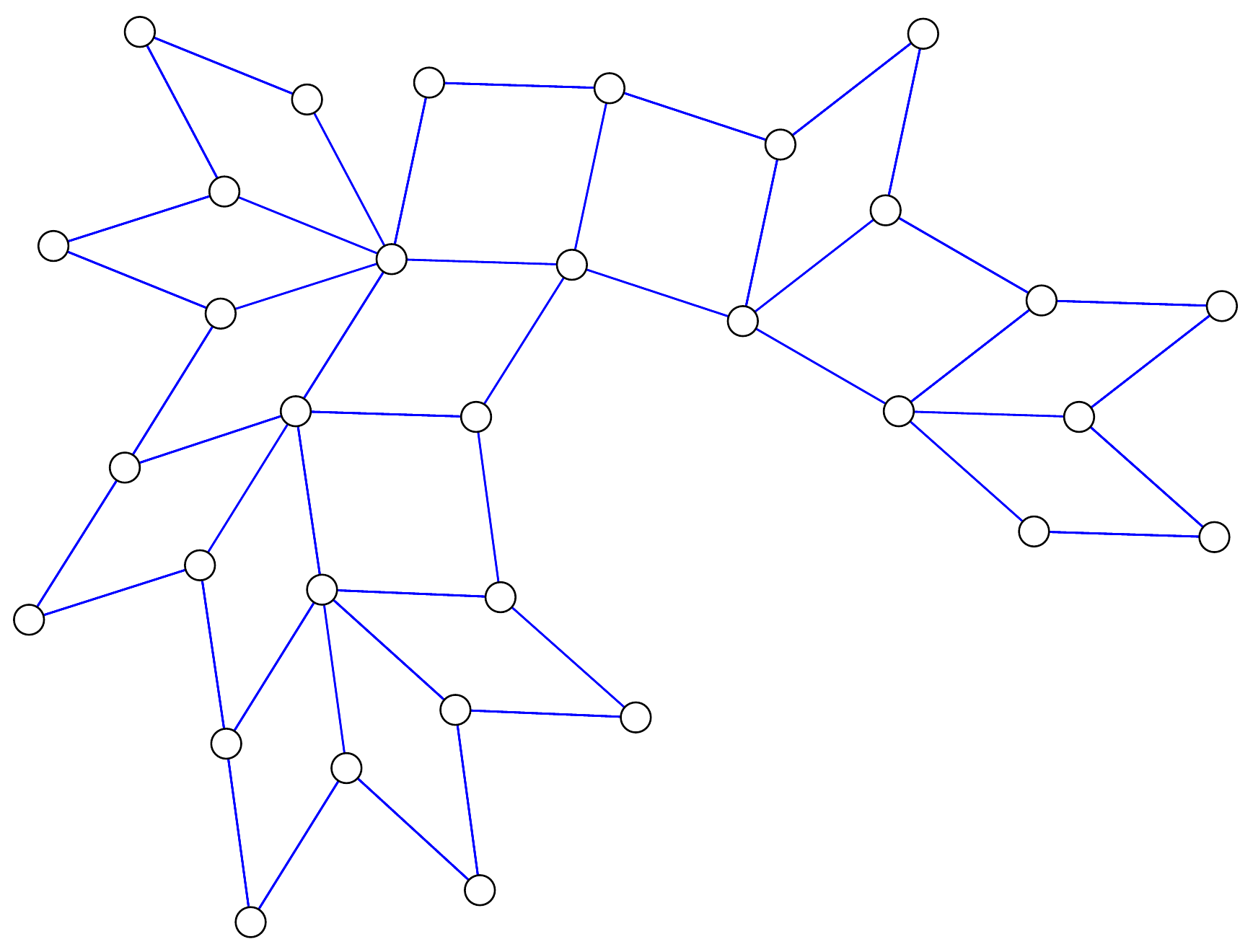}
\caption{Example input (left) and output (right) with 15 pseudolines.}
\label{figure:cd15}
\end{figure}

\begin{figure}[t]
\centering
\includegraphics[width=3.1in]{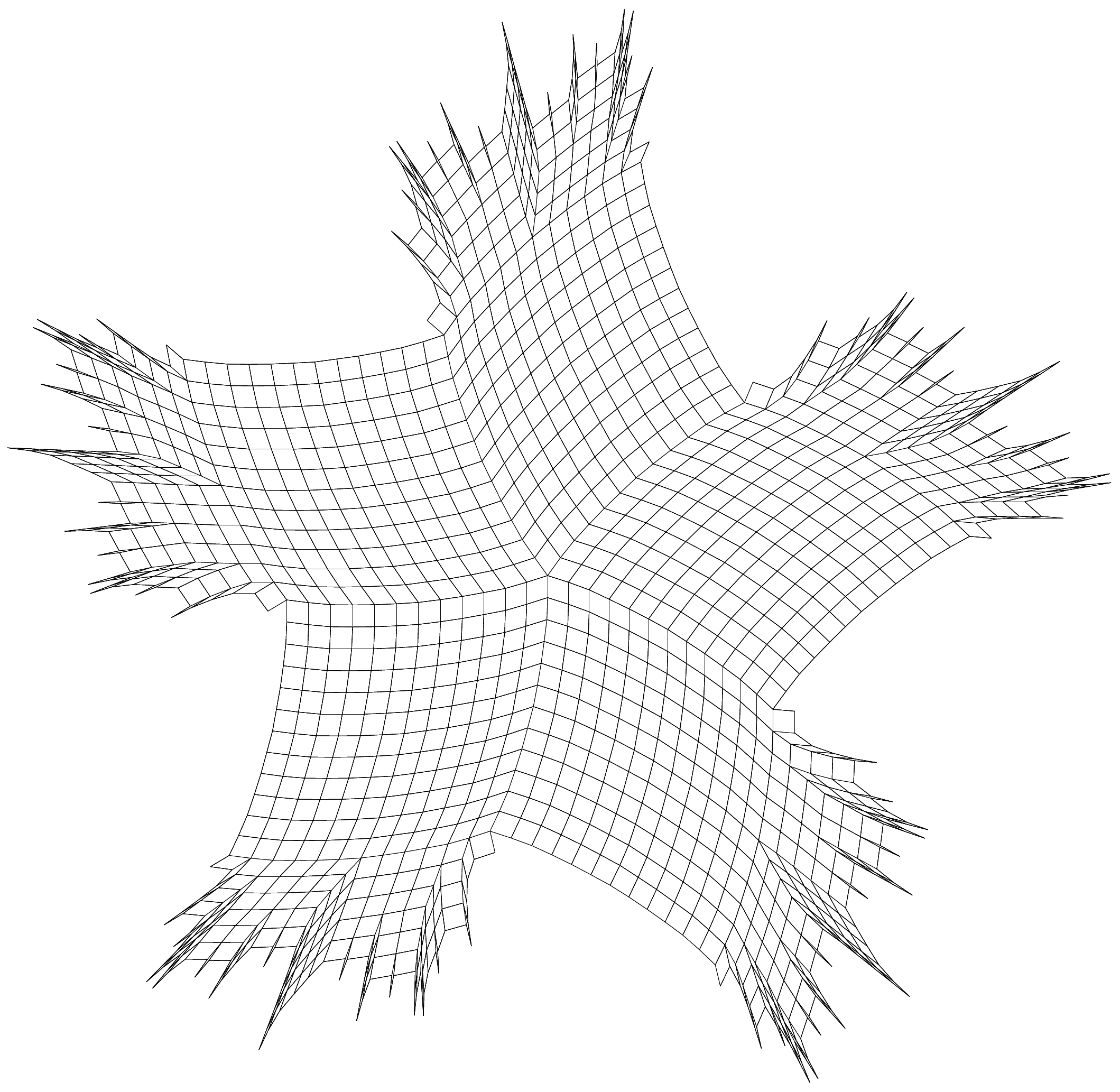} \includegraphics[width=3.1in]{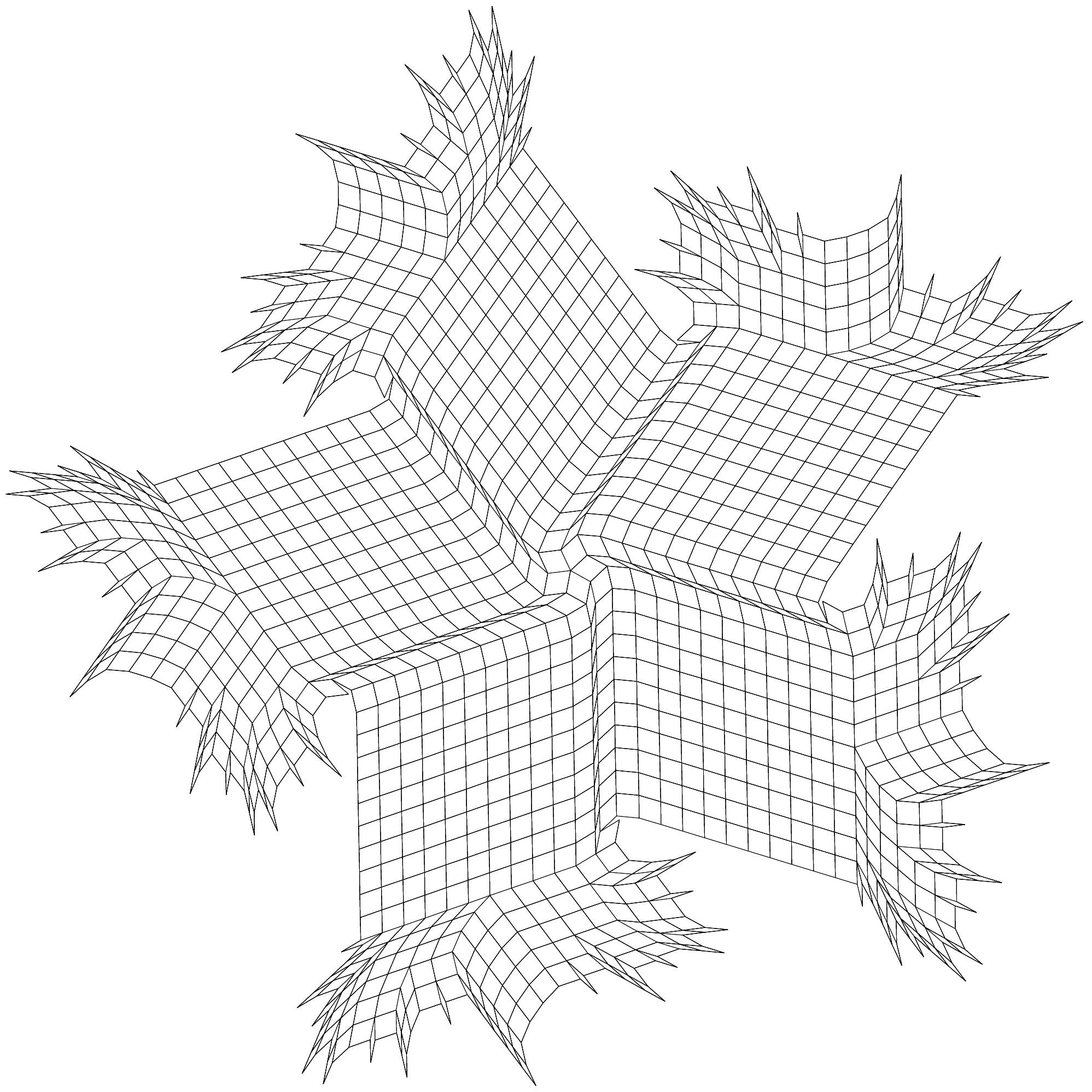}
\caption{Output of our implementation for the graph shown in Figure \ref{fig:old-220}, safely optimized (left) and unsafely optimized (right).}
\label{figure:cd220}
\end{figure}

\section{Conclusion}

We have described two algorithms for generating face-symmetric drawings with optimal angular resolution.  The first algorithm runs in strictly cubic time and uses a subsidiary parametric shortest paths algorithm as a black-box subroutine.  The second algorithm runs in pseudopolynomial time and relies on the simpler Bellman-Ford shortest paths algorithm.  We implemented the latter algorithm and found that it generates output that is both numerically and visibly improved over unoptimized drawings.

Finally we offer the following possible directions for future research:
\begin{itemize}
\item We choose the edge length for each zone arbitrarily; can choosing them more carefully lead to improved legibility?
\item Can this approach be applied to related types of drawings, such as the projections of high dimensional grid embeddings also studied in \cite{Epp-GD-04}?
\item Our algorithm respects a fixed embedding.  What about allowing for different embeddings, e.g. flipping parts of the graph at articulation vertices?  Is it still possible to optimize angular resolution efficiently in this more general problem?
\item What about other angle optimization criteria, such as those defined by all angles in the drawing, rather than merely the sharpest angle?
\end{itemize}

\bibliography{oarpdsf}
\bibliographystyle{abbrv}

\end{document}